\definecolor{webgreen}{rgb}{0,.5,0}
\definecolor{webbrown}{rgb}{.6,0,0}
\theoremstyle{plain}
\newtheorem{theorem}{Theorem}
\newtheorem*{theorem*}{Theorem}
\newtheorem*{corollary*}{Corollary}
\newtheorem{lemma}[theorem]{Lemma}
\newtheorem*{lemma*}{Lemma}
\newtheorem{proposition}[theorem]{Proposition}
\newtheorem*{proposition*}{Proposition}
\theoremstyle{definition}
\newtheorem*{definition*}{Definition}
\newtheorem*{example*}{Example}
\newtheorem*{conjecture*}{Conjecture}
\newtheorem*{problem*}{Problem}
\newtheorem*{todo*}{TO DO}
\newtheorem*{question*}{Question}
\theoremstyle{remark}
\newtheorem*{remark*}{Remark}
\newtheorem*{notation*}{Notation}
\numberwithin{theorem}{section}
\begin{document}

%
\runningtitle{\MakeLowercase{{independence:}} Fast Rank Tests}

%

\twocolumn[

\aistatstitle{\MakeLowercase{\LARGE\texttt{independence:}} Fast Rank Tests}

\aistatsauthor{ Chaim Even-Zohar }

\aistatsaddress{ The Alan Turing Institute } ]

\begin{abstract}
In 1948 Hoeffding devised a nonparametric test that detects dependence between two continuous random variables $X$ and $Y$, based on the ranking of $n$ paired samples $(X_i,Y_i)$.  The computation of this commonly-used test statistic takes $O(n \log n)$ time. Hoeffding's test is consistent against any dependent probability density $f(x,y)$, but can be fooled by other bivariate distributions with continuous margins.  Variants of this test with full consistency have been considered by Blum, Kiefer, and Rosenblatt (1961), Yanagimoto (1970), Bergsma and Dassios (2010). The so far best known algorithms to compute these stronger independence tests have required quadratic time. Here we improve their run time to $O(n \log n)$, by elaborating on new methods for counting ranking patterns, from a recent paper by the author and Leng (SODA'21). Therefore, in all circumstances under which the classical Hoeffding independence test is applicable, we provide novel competitive algorithms for consistent testing against all alternatives. Our \texttt{R}~package, \texttt{independence}, offers a highly optimized implementation of these rank-based tests. We demonstrate its capabilities on large-scale datasets.
\end{abstract}

\section{OVERVIEW}

Part~\ref{setting} briefly outlines the current need for consistent, fast, and distribution-free independence testing and measuring.

Part~\ref{hoef} surveys the original Hoeffding's $D$ measure of dependence \citep{hoeffding1948non}. Since it is solely based on ranks, the derived test statistic is indeed distribution-free. 

Hoeffding's test is only consistent against absolutely continuous bivariate distributions. Various examples are given in Part~\ref{fooling} of simple dependent alternatives that Hoeffding's $D$ fails to detect.

Part~\ref{variations} presents the two main refinements of Hoeffding's dependence measure, that yield consistent distribution-free tests against all alternatives. They go back to works of \cite{blum1961distribution}, \cite{yanagimoto1970measures}, and others. Recent works by \cite{bergsma2014consistent}, and others have popularized and extended these variants of Hoeffding's test. 

Part~\ref{algo} gives $O(n \log n)$ time algorithms of these consistent variants of Hoeffding's independence test, based on techniques of \cite{even2019counting}. Such an improvement from quadratic to near linear time has been regarded as the most important factor in selecting an independence test \citep{chatterjee2020new}, and may be considered as overcoming a natural barrier \citep[cf.][]{deheuvels2006quadratic}. By designing a dedicated algorithm for these statistics, we achieve an additional \mbox{5x-10x} improvement over direct application of our general method.

Part~\ref{rpackage} briefly describes the functionality of our newly developed \texttt{R} package \texttt{independence}
\citep[available on \texttt{CRAN},][]{independenceR},
and demonstrates its unique capabilities.

Finally, in Part~\ref{related} we briefly survey other ranking-based independence tests, that have been proposed within the same setting of Hoeffding's test. 

\section{SETTING}\label{setting}

Analyzing the empirical relationship between two variables is a classical subject in statistics. Perhaps the most basic problem given bivariate data is to decide whether there is any relation at all between the two variables. It is therefore natural that \emph{testing for independence} has long been a primary interest of statistical researchers and practitioners.

One crucial feature is the test's \emph{consistency}. Traditional tests of linear correlation, such as Pearson's $r$, Spearman's $\rho$, or Kendall's $\tau$, may rule out the independence hypothesis in some cases. However, they are not consistent, since many forms of dependency are uncorrelated, and hence undetectable by them. In general, one seeks consistent tests, that assume as little as possible on the underlying distribution, and hence eliminate many alternatives given enough data.

Another key feature is the \emph{computational complexity} of the proposed test. In the times of ``big data'', we are able to discover latent connections and subtle patterns, that are often detectable only with a great number of samples. In a typical scenario of, say, several million data records, and testing potential relations between tens of attributes, even the gap between linear and quadratic runtime is a matter of vital importance.

This work provides \emph{consistent and efficient} independence tests for $n$ paired samples of two real-valued variables $X$ and~$Y$. The variables are only assumed to be non-atomic, and the tests are \emph{distribution-free}. This is the classical setting of Hoeffding's $D$ and its variants, where one has so far been compelled to choose between consistency and efficiency. 

For other, simpler data types, consistent linear-time methods are already widely available, such as chi-square tests for categorical variables. On the contrary, testing for independence in more involved settings, such as real vectors and processes, is an active area of research. The rank-based tests that we implement here extend to more general settings in various ways, and it is plausible that our approach can help accelerate and optimize some of these generalized methods.

The implemented tests are \emph{nonparametric}, as no assumptions are made on the distribution of $X$ or $Y$, other than continuity which is normally evident from the nature of the data. Given $n$ independent observations, $\{(X_1,Y_1),\dots,(X_n,Y_n)\}$, we only make use of the relative ranking of the two coordinates, which is invariant to any monotone reparametrization of the variables. This information may be encoded by the rank-matching permutation: $\pi : \{1,\dots,n\} \to \{1,\dots,n\}$ where $\pi(\textrm{rank\,}X_i) = \textrm{rank\,}Y_i$. Since $\pi$ is uniformly random under independence, the null distribution only depends on~$n$, and conveniently one can use distribution-free $p$-values.

\section{HOEFFDING'S TEST}\label{hoef}

We first review Hoeffding's independence test, its usefulness, and limitations. Our presentation mostly follows the seminal paper by \cite{hoeffding1948non}. 

\subsection{Hoeffding's Independence Coefficient}

Let $X$ and $Y$ be two real-valued random variables on the same probability space, with a joint cumulative distribution function $F(x,y) = P(X \leq x, Y \leq y)$. Their marginal distribution functions are assumed to be continuous, and are denoted respectively by 
\begin{eqnarray*}
&F_X(x) = F(x, \infty) = P(X \leq x) \\
&F_Y(y) = F(\infty, y) = P(Y \leq y)
\end{eqnarray*}
The null hypothesis, that the two variables are independent, is expressed as: $F \equiv F_XF_Y$. 

In order to measure how much $F$ departs from independence, Hoeffding proposed the following Cram\'er--von Mises type statistic.
$$ D \;=\; \int\left(F(x,y) - F_X(x)F_Y(y)\right)^2dF(x,y) $$ 
The integration $dF$ over $\mathbb{R}^2$ can be interpreted in the sense of Riemann--Stieltjes or Lebesgue--Stieltjes. This notation emphasizes the repeated use of the same $F$ both for evaluating the deviation and for averaging it.

It is clear that $D=0$ if $X$ and $Y$ are independent. Conversely, \citet[Theorem~3.1]{hoeffding1948non} showed that $D>0$ whenever the joint distribution is dependent and \emph{absolutely} continuous, that is, has a joint density function $f(x,y)$ integrating to~$F(x,y)$. 

Next, we briefly sketch Hoeffding's derivation of an estimator for~$D$, since it will be useful later on. Expanding the square, we write
\begin{equation}
\label{dfffff}
D \,=\, \int\left(F\,F - 2\,F\,F_XF_Y + F_XF_YF_XF_Y\right)dF
\tag{$\star$}
\end{equation}
The cumulative distribution functions are then replaced by integrals,
\begin{align*}
\textstyle
F(x,y) \;&=\; \int_{\mathbb{R}^2} I(x'<x,y'<y)\,dF(x',y') \\
F_X(x) \;&=\; \int_{\mathbb{R}^2} I(x'<x)\,dF(x',y') \\
1 \;&=\; \int_{\mathbb{R}^2} dF(x',y')
\end{align*}
where $I(\cdots)=1$ if the given condition holds and~$0$ otherwise. We end up with an expression of the following form.
$$ D \;=\; \int\!\!\int\!\!\int\!\!\int\!\!\int \phi\left(x_1,y_1,\dots,x_5,y_5\right) dF\,dF\,dF\,dF\,dF$$
The function $\phi:(\mathbb{R}^2)^{5} \to \mathbb{R}$ can be written explicitly, and only depends on the order relations between its inputs, and not on the distribution~$F$. 

\subsection{Hoeffding's Test Statistic}
\label{hoefdn}

Recall that we are given $n$ independent observations, $(X_i,Y_i)$. Hoeffding's test statistic is essentially obtained by using the empirical distribution instead of the unknown~$F$. Specifically, the integrals turn into a finite average:
\begin{equation}\label{dn}\tag{$\diamond$} 
D_n \;=\;\; \frac{\displaystyle{\sum}_{i,j,k,l,m} \phi\left(X_i,Y_i,\dots,X_m,Y_m\right)}{n(n-1)(n-2)(n-3)(n-4)} 
\end{equation}
The sum is over all combinations of five distinct indices $i,j,k,l,m \in \{1,\dots,n\}$. Estimators obtained this way are called U-statistics, and were developed and studied by \cite{hoeffding1948class}. Such an estimator is unbiased, $E(D_n)=D$, and also consistent, $D_n \to D$ in probability, where $D$ is the population parameter as above.

This expression for $D_n$ can be interpreted in terms of rank patterns, in the sense of the following definitions. Recall that any set of points $\left\{(x_1,y_1),\dots,(x_n,y_n)\right\}$ induces a permutation $\pi \in S_n$ such that $\pi(\textrm{rank\,}x_i) = \textrm{rank\,}y_i$. We use the one-line notation $\pi(1)\pi(2)\dots\pi(n)$ for permutations, for example $\pi = 1423$. The number of $k$-point subsets inducing a given \emph{pattern} $\sigma \in S_k$ is denoted by $\#\sigma(\pi)$, or $\#\sigma$ for short. For example $\#21(1423)=2$, since the pattern $21$ occurs twice, at the marked entries $1\underline{4}\underline{2}3$ and~$1\underline{4}2\underline{3}$. 

The statistic $D_n$ can now be represented by counting patterns of order 5 in the permutation induced by $(X_1,Y_1),\dots,(X_n,Y_n)$. Indeed, by a straightforward case analysis of the kernel $\phi$, the sum in Equation~(\ref{dn}) equals
\begin{multline*}
4\left(\#12345 + \#12354 + \dots + \#54321\right) \\ \;-\; 2\left(\#14325 + \#14352 + \dots + \#52341\right)    
\end{multline*}
This combination involves all occurrences of 5-patterns whose middle entry is~3. The eight patterns where it separates 1 and 2 from 4 and 5 are being added, and the other sixteen patterns are subtracted.

For every point $(X_i,Y_i)$, let $a_i$, $b_i$, $c_i$ and $d_i$ be the number of other points in the four quadrants around it, as follows.
\begin{align*}
a_i = \left|\left\{j\;:\;\substack{X_j<X_i\\Y_j>Y_i}\right\}\right| && 
b_i = \left|\left\{j\;:\;\substack{X_j>X_i\\Y_j>Y_i}\right\}\right| \\
c_i = \left|\left\{j\;:\;\substack{X_j<X_i\\Y_j<Y_i}\right\}\right| &&
d_i = \left|\left\{j\;:\;\substack{X_j>X_i\\Y_j<Y_i}\right\}\right|
\end{align*}
\cite{hoeffding1948non} gives a formula for computing~$D_n$, which can be rephrased by means of these variables as follows:
\begin{multline*}
D_n \;=\; \tfrac{1}{n(n-1)\cdots(n-4)} \sum\limits_{i=1}^{n} \scalebox{1.2}{[} a_i(a_i-1)d_i(d_i-1) \\ \;+\; b_i(b_i-1)c_i(c_i-1) \;-\; 2\,a_i b_i c_i d_i \scalebox{1.2}{]}
\end{multline*}
\cite{blum1961distribution} studied Hoeffding's statistic, and proposed a slightly more convenient estimator:
$$ B_n \;=\; \frac{1}{n^5}\sum\limits_{i=1}^n\,\left[a_id_i-b_ic_i\right]^2 $$ 
Asymptotically in~$n$, these two statistics differ by a constant offset to the normalized null distribution, and a negligible term under continuous alternatives.

The sequence $\{a_i\}$ is essentially the \emph{inversion code} of the induced permutation~$\pi$ \cite[p.~36]{stanley2011enumerative}. The computation of this sequence, and thereby the other three, in $O(n \log n)$ time is a textbook exercise, see~\cite[14.1-7, for example]{cormen2009introduction} or the tools we use in~\S\ref{algo}. Hence the computation of Hoeffding's~$D_n$ costs $O(n \log n)$ time.

\subsection{Hoeffding's Independence Test}
\label{htest}

As usual, Hoeffding's test is performed by computing $D_n$ and comparing it with its distribution under the independence hypothesis. If $D_n$ is significantly large then independence is reject. This happens with probability tending to one under any alternative with $D>0$.

The null distribution is uniquely determined regardless of $F_X$ and $F_Y$, since any ordering for the $X_i$'s is equally likely, and same for the~$Y_i$'s. For large $n$, the distribution is approximated using the following limit law~\cite[Theorem 8.1]{hoeffding1948non}.
$$ n \, D_n \;\;\; \xrightarrow[\;\; n \to \infty \;\;]{} \;\;\;  \sum_{j=1}^\infty\sum_{k=1}^\infty \frac{1}{\pi^4j^2k^2}\left(Z_{jk}^2 - 1\right) $$
where $Z_{jk}$ are independent gaussians. See also~\cite{blum1961distribution}. For further information on the distribution of general linear combinations of pattern counts in a random permutation see~\cite{janson2015asymptotic} and \cite{even2018patterns}. 

Efficient implementations of Hoeffding's~$D$ test are available in popular software, e.g.~\citet[\href{https://cran.r-project.org/package=wdm}{\texttt{wdm} package}]{R}, \citet[\href{https://reference.wolfram.com/language/ref/HoeffdingD.html}{\texttt{HoeffdingD}}]{Mathematica}, \citet[\href{https://v8doc.sas.com/sashtml/proc/zconcept.htm}{\texttt{PROC CORR}}]{sas2015base}.

\begin{remark*}
Although our discussion mainly adopts the standpoint of hypothesis testing, Hoeffding's $D_n$ is also much relevant from a descriptive point of view. It is meaningful as a distribution-free measure of dependence, consistently estimating the population's $D \in [0,\tfrac{1}{30}]$. Such a quantity may serve as a distance function in the context of \emph{variable clustering}, for example, and Hoeffding's~$D_n$ in particular has been recommended in these pursuits, see \cite{harrell2015regression}.
\end{remark*}

\section{FOOLING HOEFFDING}
\label{fooling}

As mentioned above, the consistency of Hoeffding's test was established under the condition that the joint distribution of $(X,Y)$ is absolutely continuous. It is not guaranteed that $D>0$ for other dependent alternatives. Some distributions with no density still happen to be detectable. For example, $D=\tfrac{1}{30}$ in the important case of a monotone dependency, $Y=\varphi(X)$ almost surely. However, $D$~might vanish under other dependent distribution. We discuss various examples.

\paragraph{(I)}

The first example was given by \cite{yanagimoto1970measures}. It can be described as a perturbation of an independent distribution. Without loss of generality we start from the uniform measure, $U([0,1]\times[0,1])$. Inside this square we choose a parallelogram made of two axes-parallel right triangles: \raisebox{-0.1em}{\tikz[scale=0.125]{\draw[black,line width=1] (0,0)--(3,2)--(0,2)--(-3,0)--(0,0)--(0,2);}}. The probability mass in the interior of the triangles is then transferred to the diagonal edges, upward in one triangle and downward in the other. Clearly, this makes the coordinates $X$ and $Y$ dependent, while still non-atomic. See Figure~\ref{fool}, and the paper of \cite{yanagimoto1970measures}.

\begin{figure}[t]
\centering
\tikz[scale=8]{
\foreach[evaluate={
\x = 0.49*rand+0.5;
\y = 0.49*rand+0.5;
\y = (\x>0.3 && \x<0.6 && \y>0.1 && \y<2*\x-0.5?2*\x-0.5:\y;
\y = (\x>=0.6 && \x<0.9 && \y<0.7 && \y>2*\x-1.1?2*\x-1.1:\y;
}] \i in {1,...,1000}{ 
\node[gray,mark size=1pt] at (\x,\y) {$\bullet$};}
\draw[black, line width=1.5] (0,0) rectangle (1,1);
}
\vspace{.3in}
\caption{A Random Sample of 1000 Points from Yanagimoto's Counterexample.}
\vspace{.1in}
\label{fool}
\end{figure}

This distribution satisfies $D=0$. Indeed, $F_X(x)=x$ for $x \in [0,1]$ because the probability mass is only ``swept'' vertically. Also $F_Y(y)=y$ for $y \in [0,1]$ as the two diagonals complement each other. Finally, $F=F_XF_Y$ almost everywhere with respect to~$F$. Indeed, let $(x,y)$ be a point in the support of this distribution. At least one of the four quadrants around it is unchanged, so  $F(x,y)=xy$ follows.
Of course this is not the case in the interior of the parallelogram, but that area is invisible to the integral~$dF$.

\paragraph{(II)}

One can observe that Yanagimoto's counterexample is actually quite robust. The position, size and ratio of the parallelogram do not matter. Several disjoint parallelograms would work as well. 

Moreover, the two diagonal lines may be replaced by other monotone curves that preserve the marginals. For example, one may sweep all the in-between probability mass to the following two hyperbola branches:
$$ \begin{array}{c} 2y(1-x)=1 \\[0.25em] 2x(1-y)=1 \end{array} \;\;\;\;\;\;\;\; \raisebox{-1.125em}{\tikz[scale=1]{ \fill[draw=black,fill=lightgray,line width=0.7] (0,1)--(0,0.5)--(0.050,0.526)--(0.100,0.556)--(0.150,0.588)--(0.200,0.625)--(0.250,0.667)--(0.300,0.714)--(0.350,0.769)--(0.400,0.833)--(0.450,0.909)--(0.5,1)--(0,1) (1,0)--(0.5,0)--(0.526,0.050)--(0.556,0.100)--(0.588,0.150)--(0.625,0.200)--(0.667,0.250)--(0.714,0.300)--(0.769,0.350)--(0.833,0.400)--(0.909,0.450)--(1,0.5)--(1,0); \draw[lightgray,line width=0.8] (0,0)--(1,0)--(1,1)--(0,1)--(0,0);}} $$
This clears an area of $\log 2 \approx 0.69$ in the unit square while $F_X(x)=x$ and~$F_Y(y)=y$ for all $x,y \in [0,1]$. Moreover, splitting this distribution at the median of~$X$ completely coincides with the splitting by~$Y$. Still, Hoeffding's $D=0$ for the same reasons as before.

\paragraph{(III)}

We give another new counterexample, of a different nature. It is based on the binary expansion $X = 0.X_1X_2X_3\dots$ where $X_i$ are independent Bernoulli$(\tfrac12)$ random variables. Denote by $\ell = \ell(X) \in \mathbb{N}$ the length of the first run in this binary representation, namely $X_1 = X_2 = \dots = X_{\ell} \neq X_{\ell+1}$. Let $Y = 0.Y_1Y_2Y_3\dots$ be obtained by resampling the first $\ell$ digits of~$X$, so that $Y_1,\dots,Y_{\ell}$ and $X$ are independent given~$\ell$, while $Y_i = X_i$ for $i > \ell$. Here is an example, with the first run marked.
\begin{align*}
X \;&=\; \texttt{ 0.\underline{00000}1101000101...} \\
Y \;&=\; \texttt{ 0.\underline{01101}1101000101...}
\end{align*}

Each of the two variables $X$ and $Y$ follows a continuous uniform distribution on the interval $[0,1]$, and their joint distribution is clearly dependent. However, we show that these variables fool Hoeffding's test. 

\begin{proposition}\label{fool2}
For $X$ and $Y$ distributed as above, Hoeffding's $D = 0$.
\end{proposition}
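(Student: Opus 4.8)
The plan is to reduce the assertion to a pointwise identity for the joint distribution function and then prove that identity by a direct digit computation; note that, unlike in examples (I) and (II), no quadrant around a support point is left entirely unchanged by the implied rearrangement of the uniform measure, so some such computation seems unavoidable. The reduction is standard: since the integrand in $D=\int(F-F_XF_Y)^2\,dF$ is non-negative, $D=0$ is equivalent to $F(x,y)=F_X(x)F_Y(y)$ for $F$-almost every $(x,y)$, which---both margins being uniform---reads $P(X\le x,\,Y\le y)=xy$ for $F$-a.e.\ $(x,y)$. To get at the left-hand side I would first pin down the law of $Y$ given $X$. Conditioning on $\ell=\ell(X)=k$ and on the first bit $\epsilon=X_1$, one finds $X$ uniform on a dyadic interval of length $2^{-k}$ (namely $(2^{-k-1},2^{-k})$ when $\epsilon=0$, and its image under $x\mapsto1-x$ when $\epsilon=1$), while $Y$ is $X$ with its leading $k$ digits overwritten by a fresh uniform block---equivalently $Y=X+S$ up to the same reflection, with $S$ uniform on $\{0,2^{-k},\dots,(2^{k}-1)2^{-k}\}$ and independent of $X$. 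Hence for a.e.\ $u\in(0,1/2)$, with $k=\ell(u)$, the conditional law of $Y$ given $X=u$ is uniform on the progression $\{\,u+m2^{-k}:0\le m<2^{k}\,\}$, so that $P(Y\le y\mid X=u)=2^{-k}\bigl(\lfloor 2^{k}(y-u)\rfloor+1\bigr)$ for a.e.\ $u\in(0,y)$.

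Since $X$ is uniform, $F(x,y)=\int_{0}^{x}P(Y\le y\mid X=u)\,du$, and by the symmetry $(X,Y)\mapsto(1-X,1-Y)$ it is enough to handle a generic support point with $x\in(2^{-k_0-1},2^{-k_0})$ and $y=x+m_02^{-k_0}$, $0\le m_0<2^{k_0}$. I would split the integral over the dyadic blocks $u\in(2^{-k-1},2^{-k})$: the blocks with $k<k_0$ lie outside $(0,x)$; the block $k=k_0$ contributes the single term $2^{-k_0}(m_0+1)\bigl(x-2^{-k_0-1}\bigr)$ (the floor being constantly $m_0$ there); and for each $k>k_0$ the integrand is a step function of $u$ whose block integral evaluates to $2^{-2k-1}\lfloor 2^{k}y\rfloor$, plus a correction $2^{-2k-1}\bigl(2\{2^{k}y\}-1\bigr)$ (with $\{t\}:=t-\lfloor t\rfloor$) that is present exactly when the $(k{+}1)$-st binary digit of $x$ equals $1$. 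Summing over $k>k_0$, and abbreviating by $\beta:=2^{k_0+1}x-1\in(0,1)$ the binary tail $0.x_{k_0+2}x_{k_0+3}\cdots$ of $x$ past its first run, the whole expression should collapse to $xy$ precisely when
$$ \sum_{j\ge 0}4^{-j}\,\lfloor 2^{j}\beta\rfloor \;+\; \sum_{j\ge 0}4^{-j}\max\bigl(0,\,2\{2^{j}\beta\}-1\bigr) \;=\; 2\beta^{2}. $$
To close I would expand $\beta=\sum_{i\ge1}\beta_i2^{-i}$ in binary and interchange summation: the first sum equals $2\sum_{i\ge1}\beta_i4^{-i}$ and the second (whose $j$-th term is nonzero exactly when $\beta_{j+1}=1$) equals $4\sum_{1\le i<i'}\beta_i\beta_{i'}2^{-i-i'}$, so their sum is $2\bigl(\sum_i\beta_i2^{-i}\bigr)^{2}=2\beta^{2}$ because $\beta_i^2=\beta_i$; this gives $F(x,y)=xy$ for $F$-a.e.\ $(x,y)$, hence $D=0$.

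I expect the main obstacle to be the bookkeeping in the second paragraph: keeping the floor functions straight, isolating the anomalous block $k=k_0$ from the generic blocks $k>k_0$, and above all tracking the digit-dependent correction terms carefully enough that, after summation, the cross terms reassemble into exactly the $4\sum_{i<i'}\beta_i\beta_{i'}2^{-i-i'}$ needed to complete the square for $\beta^2$; the dependence on $m_0$ should cancel cleanly once the block-$k_0$ term is subtracted. A minor, routine point is that the exceptional $F$-null set is genuinely nonempty---it contains the dyadic rationals, where $\ell$ and the binary expansions are ambiguous---but being countable it does not affect the almost-everywhere conclusion.
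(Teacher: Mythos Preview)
Your argument is correct, and the final binary identity is a nice touch: expanding $\beta$ bitwise, the floor sum gives the diagonal terms $2\sum_i\beta_i4^{-i}$ and the correction sum gives the off-diagonal terms $4\sum_{i<i'}\beta_i\beta_{i'}2^{-i-i'}$, which together reconstitute $2\beta^2$ exactly because $\beta_i^2=\beta_i$. I checked the reduction as well: with $j=k-k_0-1$ one has $\lfloor 2^ky\rfloor=2^j(1+2m_0)+\lfloor 2^j\beta\rfloor$ and $\{2^ky\}=\{2^j\beta\}$, and after multiplying through by $2^{2k_0+3}$ all the $m_0$-dependent terms cancel, leaving precisely your identity.

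The paper reaches the same conclusion by a different and somewhat slicker partition. Instead of splitting $(0,x)$ into the dyadic level sets $(2^{-k-1},2^{-k})$ of $\ell$, it partitions according to the positions $a<b<c<\cdots$ of the $1$'s in the binary expansion of $x$: writing $x_i$ for the tail $2^{-i}+2^{-j}+\cdots$, it alternates intervals of the two types $(2^{-i},x_i)$ and $(x_j,2^{-i})$. The point of this grouping is that on the combined interval $(x_j,x_i)$ one gets the clean identity $P\bigl(X\in(x_j,x_i),\,Y\le y\bigr)=2^{-i}y$, so the final sum is simply $\sum_i 2^{-i}y=xy$ with no residual algebra. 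In effect, the paper's partition absorbs your digit-dependent correction terms into the neighboring block, so the ``completing the square'' step never appears. Your route is more mechanical---integrate on each level set of $\ell$, then sum---at the cost of a heavier endgame; the paper's route requires spotting the right coarser partition but rewards it with a one-line finish.
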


The proof is given in the supplementary material, \S\ref{foolproof}.

The dependency in this counterexample is stronger than the previous one in the sense that for any given $X$ there are only finitely many options for~$Y$, and often as few as two. We note that the proof works the same even if we restrict it to $Y<0.5$. Then the distribution satisfies $Y = \phi(X)$ with probability $2/3$, where $\phi(x) = (x \bmod 0.5)$, yet Hoeffding's $D=0$.

It is an interesting phenomena that the resulting probability measure is \emph{singular} with respect to the independent distribution with the same marginals. Even so, Hoeffding's statistic cannot tell them apart.

We argue that the occurrence of this kind of dependency in a real-world application is not unimaginable. In numerical simulations, for example, it might happen that two outcomes share certain parts of their binary expansion. 

The above examples can be modified to guarantee that the dependency stays undetectable also by standard correlation tests, such as Spearman's~$\rho$ and Kendall's~$\tau$. One way to do that is by adding a ``mirror image'', i.e., taking $X' = \pm X \in [-1,1]$, where the sign is independent of $(X,Y)$ and determined by flipping a fair coin. 

\section{VARIATIONS}\label{variations}

Many variations of 
Hoeffding's test and extensions in variaous directions have been proposed over the years. Here we describe the two main variants that maintain the original spirit of the test while achieving consistency for all non-atomic real-valued $X$ and~$Y$.

\subsection{Refined Hoeffding Test}

The above counterexamples highlight where the blind spots of Hoeffding's test are. The integral~$dF$ that appears in $D$'s definition might miss certain regions in the plane, where $F \neq F_X F_Y$. From this perspective, it is natural to try a double integral $dF_X dF_Y$ instead, which leads to the following modification of~$D$.
$$ R \;=\; \iint\left(F(x,y) - F_X(x)F_Y(y)\right)^2dF_X(x)dF_Y(y) $$
This measure of independence is often attributed to Blum, Kiefer, and Rosenblatt \citep{blum1961distribution}. In their paper, they discuss several possible variants and extensions of Hoeffding's test, and this one is defined as $\gamma_F^2$ on page~490. Curiously, a similar measure had already been defined in an early work of \cite{hoffding1940masstabinvariante}, see \citep[$\Phi^2$~on page~62]{hoeffding1994collected}. It was later introduced elsewhere, such as in the work of \citet[page 58]{yanagimoto1970measures}. 

The parameter $R$ vanishes if and only if $X$ and $Y$ are independent \cite[bottom of page~490]{blum1961distribution}. Therefore, a consistent estimator of~$R$ yields a consistent independence test against all alternatives. We hence derive a test statistic $R_n$ given $n$ independent samples, similar to $D_n$ above. We first expand,
\begin{equation}
\label{bfffff}
R \,=\, \iint\left(F\,F - 2\,F\,F_XF_Y + F_XF_YF_XF_Y\right)\,dF_XdF_Y
\tag{$\star\star$}
\end{equation} 

The integral of the last term separates and simplifies to $\int F_X^2dF_X\int F_Y^2dF_Y = 1/9$. The other terms are replaced with integrals over indicator functions, as for $D$ in (\ref{dfffff}) above. We end up with a fivefold integral $\int\!\!\int\!\!\int\!\!\int\!\!\int \psi \, dFdFdFdFdF$, for some $\psi$ that only depends on the rank patterns induced by its five input points.

Given $(X_1,Y_1), \dots, (X_n,Y_n)$, the estimator $R_n$ may be defined by a sum over sets of five samples, similar to Equation~(\ref{dn}), with $\phi$ replaced by~$\psi$.

The independence test works as described in Section~\ref{htest}, using $R_n$ instead. It turns out that assuming independence $D_n$ and $R_n$ are equivalent up to an order $n^{-3/2}$ term \citep[for example]{even2018patterns}.
In particular, $nD_n$ and $nR_n$ follow the same asymptotic null distribution, described by \cite{hoeffding1948non} and \cite{blum1961distribution}. In Part~\ref{algo} we give a first near linear time algorithm for computing~$R_n$.

\subsection{Bergsma--Dassios--Yanagimoto Test}
\label{bdy} 

The next step in the evolution of Hoeffding's independence test originates in multiple works, including the following question on the distribution of permutation patterns. 

Recall that the permutation induced by $k$ samples $(X_1,Y_1),\dots,(X_k,Y_k)$ is uniformly distributed under the independence hypothesis, with each $\pi \in S_k$ equally likely. Hoeffding's $D$, and its stronger variant $R$, provide a converse statement: if every 5-pattern $\pi \in S_5$ is equally likely to be induced by $(X_1,Y_1),\dots,(X_5,Y_5)$ then $X$ and $Y$ are independent. 

\cite{yanagimoto1970measures} showed that already equiprobable 4-patterns  imply independence, but 3-patterns are not sufficient. This implication is shown by taking a nonnegative combination of the nonnegative dependence measures $D$ and $R$,  
$$ T = D + 2 R $$
Since $D=R=0$ if and only if $X$ and $Y$ are independent, also $T=0$ exactly in that case. Yanagimoto observed that, with this choice of combination, the order-five terms in the expansions (\ref{dfffff}) and~(\ref{bfffff}) add up to a constant, leaving  
\begin{multline*}
T \;=\; 2 \iint F \, F\, dF_X \, dF_Y \;+\; \int F \, F \, dF \\ \;-\; 2 \int F_X \, F_Y \, F \, dF \;-\; \frac{1}{9}
\end{multline*}
By counting $F$'s in these integrals, one can represent $T = \int\!\!\int\!\!\int\!\!\int \vartheta \, dFdFdFdF$, where $\vartheta$ depends only on the induced patterns of four points, as claimed. 

\citet*[][]{bergsma2010nonparametric} and \citet*[][]{bergsma2010consistent,bergsma2014consistent} independently derived this independence measure, and denoted it by~$\tau^{\star} = 12T$. It seems they were the first to utilize it for testing independence in practice, and its popularity has risen in the past decade. 

Testing with $T$ proceeds similar to Hoeffding's test. The test statistic $T_n$ is an estimator of $T$ from $n$ samples, defined by fourfold summation analogously to Equation~(\ref{dn}). The null distribution of $T_n/3$ tends to the same limit as for $R_n$ and $D_n$ \citep{nandy2016large, dhar2016study}.

The computation of~$T_n$ and our new $O(n \log n)$ time algorithm are discussed in Part~\ref{algo}.

\newcommand{\cross}[4]{\raisebox{-7.5pt}{\tikz[scale=0.15]{
\foreach \x/\y in {1/#1,2/#2,3/#3,4/#4} {
\draw[black,fill=black] (\x,\y) circle (0.2);}
\draw[densely dotted,line width=0.4] (0,2.5)--(5,2.5);
\draw[densely dotted,line width=0.4] (2.5,0)--(2.5,5);
}}}

The case for the Bergsma--Dassios--Yanagimoto test is supported by its simplicity, as it uses the ranking patterns of four points rather than five. \cite{bergsma2014consistent} discuss an analogy between $\tau^{\star}$ and Kendall's $\tau$ correlation coefficient. They offer another compelling interpretation, noting that a formula of this test statistic classifies the pattern occurrences in the induced permutation into two types.
\begin{equation*}
T_n \;=\; \frac{1}{\tbinom{n}{4}}\left( \tfrac{1}{18} \sum_{\sigma \in \mathcal{C}} \#\sigma \,-\, \tfrac{1}{36} \sum_{\sigma \in \mathcal{D}} \#\sigma \right)
\end{equation*}
where 
\begin{align*}
\mathcal{C} \;&=\; \left\{ 1234, 1243, 2134, 2143, 3412, 3421, 4312, 4321 \right\} \\
\mathcal{D} \;&=\; S_4 \setminus \mathcal{C}
\end{align*}
are the sets of \emph{concordant} and \emph{discordant} 4-patterns, respectively. This classification of four-point rankings is best explained graphically as follows, by means of their \emph{cruciform partition}:
\begin{align*}
\mathcal{C}:& \;\;\; \cross1234 \;\, \cross1243 \;\, \cross2134 \;\, \cross2143 \;\, \cross3412 \;\, \cross3421 \;\, \cross4312 \;\, \cross4321 \\[0.5em]
\mathcal{D}:& \;\;\; \cross1324 \;\, \cross1342 \;\, \cross1423 \;\, \cross1432 \;\, \cross2314 \;\, \cross2341 \;\, \cross2413 \;\, \cross2431 \\
&\;\;\; \cross3124 \;\, \cross3142 \;\, \cross3214 \;\, \cross3241 \;\, \cross4123 \;\, \cross4132 \;\, \cross4213 \;\, \cross4231 
\end{align*}
Thus, departure from independence manifests as less quadruples with one point in each corner, and more with two pairs in opposite corners. 

\paragraph{Remark}

The statistic $T_n$ also arises in the combinatorial context of \emph{quasirandom} permutations. These are deterministic sequences $\pi_n \in S_n$ with asymptotic properties that random permutations satisfy with probability approaching one. For example, any box $[a,b]\times[c,d] \subseteq [0,1]^2$ contains $(b-a)(d-c)n \pm o(n)$ points of the form $(i/n,\pi_n(i)/n)$. \cite{cooper2004quasirandom} gives several equivalent characterizations, including various notions of discrepancy, and properties of the Fourier coefficients of~$\pi_n$. Another one is $\#\sigma(\pi_n)/\tbinom{n}{k} \to 1/k!$ for every $k$-pattern~$\sigma$, for all $k \in \mathbb{N}$.

The latter quasirandom property relates to the characterization of independence by patterns due to \cite{hoeffding1948non}. \cite{hoppen2011testing,hoppen2013limits} similarly use $\#\sigma(\pi_n)$ to define more general \emph{limit permutations}, that correspond to alternative distributions. As noted by \cite{even2018patterns},
the consistency of the Bergsma--Dassios--Yanagimoto statistic translates to the following simple criterion:
$$ \left\{\pi_n\right\} \text{ is quasirandom} \;\;\;\; \Leftrightarrow \;\;\;\; T_n\left(\pi_n\right) \to 0 $$
A series of papers on quasirandom and other limit permutations have independently reached the same conclusion \citep{kral2013quasirandom,glebov2015finitely,chan2019characterization}.

\section{ALGORITHMS}\label{algo}

We now present our algorithms for computing the consistent variants $R_n$ and $T_n$ of Hoeffding's statistic. The linear relation $T = D + 2R$ between these population coefficients implies that also for the sample statistics,
$$ T_n \;=\; D_n + 2R_n $$
See \cite{even2018patterns}
and \cite{drton2018high}. Since Hoeffding's $D_n$ is computed in time $O(n \log n)$, as shown in \S\ref{hoefdn}, A linear time computation of~$R_n$ would reduce to that of~$T_n$.

By definition $T_n$ can be computed in~$O(n^4)$, though \cite{bergsma2010consistent} noted that it can be improved to~$O(n^3)$ and left its complexity as an open problem. Later, \cite{bergsma2014consistent} suggested to approximate this statistic by averaging a random subset of the $\tbinom{n}{4}$ terms in its representation as a U-statistic. However, under the null hypothesis $T_n$ is degenerate of rank two, hence this approximation incurs a substantial loss of information even if as much as $O(n^2)$ terms are taken \citep{janson1984asymptotic}. \cite{weihs2016efficient} proposed an algorithm that precisely computes $T_n$ in $O(n^2 \log n)$ time and near linear memory use. \cite{heller2016computing} showed it can be computed in $O(n^2)$ time and memory. 

Also the apparent quadratic order cost of $R_n$ seems to have been regarded as a natural barrier, since it is an empiric version of the double integral $dF_X dF_Y$ defining~$R$,  compared to the single integral $dF$ in~$D$, see (\ref{dfffff}) and~(\ref{bfffff}). In fact, Deheuvels, Peccati and Yor have written that the fully consistent $R_n$ is less popular than $D_n$ only because it requires the summation of $n^2$ terms \citep[p.~496]{deheuvels2006quadratic}. Recently, \cite{chatterjee2020new} has considered near linear computational
time as the major concern when selecting a bivariate independence test in practice.

Our description of the near linear algorithm is completely self-contained, despite being inspired by the general method of \cite{even2019counting}.
It uses as a data structure a numeric array of fixed size~$n$, named \textsc{sum-array}, that supports assigning a value to a given position in $O(\log n)$ time, and also supports range sum queries in $O(\log n)$ time. Such an array is easy to implement with a complete binary tree of depth $\left\lceil\log n\right\rceil$, and goes back at least to \cite{shiloach1982n2log}. If $A$ is a \textsc{sum-array}, then $A.\textsc{prefix-sum}(y)$ returns $A[1] + \dots + A[y]$ in logarithmic time, and similarly for $A.\textsc{suffix-sum}$. 

Given the $n$ samples $(X_1,Y_1),\dots,(X_n,Y_n)$, we first sort them in $O(n\log n)$ time and compute the ranking permutation $\pi:\{1,\dots,n\}\to\{1,\dots,n\}$, satisfying $\pi(\text{rank}X_i) = \text{rank}Y_i$ as explained above in \S\ref{setting}. The inverse permutation $\pi^{-1}$ is clearly computable in linear time. We then run Algorithm~\ref{alg} on $\pi$. We claim that its output is the statistic $\tau^\star(\pi)$ of \cite{bergsma2014consistent}, which is $12T_n$ in our notation. 

\begin{algorithm}[tb]
\caption{Compute the $\tau^*$ statistic}
\linespread{1.25}\selectfont
\label{alg}
\begin{algorithmic}[1]
\Function{quad}{permutation $\pi \in S_n$}
\State $N \leftarrow 0$
\State $A \leftarrow \textsc{sum-array}(0,\dots,0)$
\Comment{of size $n$}
\State $A_u \leftarrow \textsc{sum-array}(0,\dots,0)$
\State $A_d \leftarrow \textsc{sum-array}(0,\dots,0)$
\State $A_{ud} \leftarrow \textsc{sum-array}(0,\dots,0)$
\For{$x$ \textbf{in} $(1,\dots,n)$}
\State $N_u \leftarrow A.\textsc{prefix-sum}(\pi[x])$
\State $N_d \leftarrow A.\textsc{suffix-sum}(\pi[x])$
\State $N_{du} \leftarrow A_d.\textsc{prefix-sum}(\pi[x])$
\State $N_{ud} \leftarrow A_u.\textsc{suffix-sum}(\pi[x])$
\State $N_{udu} \leftarrow A_{ud}.\textsc{prefix-sum}(\pi[x])$
\State $A[\pi(x)] \leftarrow 1$
\State $A_u[\pi(x)] \leftarrow N_u$
\State $A_d[\pi(x)] \leftarrow N_d$
\State $A_{ud}[\pi(x)] \leftarrow N_{ud}$
\State $\Delta \leftarrow 2N_{udu} - N_{du} N_d - N_{ud}N_u + (x-2) N_u N_d$
\State $N \leftarrow N + \Delta$
\EndFor
\State \Return $N$
\EndFunction
\vspace{0.5em}
\Function{tau-star}{permutation $\pi \in S_n$}
\State $S \leftarrow \textsc{quad}(\pi)$
\State $S \leftarrow S + \textsc{quad}(\pi(n),\dots,\pi(1))$
\State $S \leftarrow S + \textsc{quad}(\pi^{-1}(1),\dots,\pi^{-1}(n))$
\State $S \leftarrow S + \textsc{quad}(\pi^{-1}(n),\dots,\pi^{-1}(1))$
\State \Return $\frac23 - \frac{1}{4}\, S / \binom{n}{4}$
\EndFunction
\end{algorithmic}
\end{algorithm}

\begin{theorem}
\label{algworks}
Algorithm~\ref{alg} returns $\tau^{\star}$.
\end{theorem}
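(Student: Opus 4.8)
The plan is to show that the four calls to \textsc{quad} inside \textsc{tau-star} together count, with the right multiplicities, exactly the quantity $\tfrac{1}{18}\sum_{\sigma\in\mathcal C}\#\sigma - \tfrac{1}{36}\sum_{\sigma\in\mathcal D}\#\sigma$ that the excerpt gives as $T_n$, after which the affine rescaling $\tfrac23 - \tfrac14 S/\binom n4$ produces $12T_n = \tau^\star$. I would begin by analyzing a single call \textsc{quad}$(\pi)$ and identifying what combinatorial object the running sum $N$ computes. The outer loop iterates over $x = 1,\dots,n$ in increasing order of the first coordinate (rank of $X$), and at iteration $x$ the arrays record, indexed by second coordinate $\pi(x')$ for the already-processed $x' < x$: the array $A$ holds indicator $1$'s (so $N_u, N_d$ count processed points above/below the current height $\pi(x)$), $A_u$ and $A_d$ hold at position $\pi(x')$ the number of earlier points that were respectively below/above $x'$ (so $N_{ud}, N_{du}$ count monotone-in-rank 3-chains of a prescribed up/down shape ending just left of and straddling $x$), and $A_{ud}$ holds 3-chain counts so that $N_{udu}$ counts certain 4-chains. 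I would write out explicitly, for each of the five tallies $N_u, N_d, N_{du}, N_{ud}, N_{udu}$, which triples or quadruples $x_1 < x_2 < \dots$ with specified sign pattern of the $\pi$-values they enumerate, being careful that the "assign after query" order in lines 8--16 means the point $x$ itself is never counted, and that $A_u, A_d, A_{ud}$ are populated using the values $N_u, N_d, N_{ud}$ computed at the point being inserted, so an inductive argument gives their contents.

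The heart of the argument is then the line $\Delta \leftarrow 2N_{udu} - N_{du}N_d - N_{ud}N_u + (x-2)N_uN_d$ summed over $x$. Here $(x-2)N_uN_d$, $N_{du}N_d$, $N_{ud}N_u$ are products of counts over disjoint index ranges (points before $x$ versus, in the $(x-2)$ factor, the remaining points) and hence count quadruples by inclusion--exclusion, while $N_{udu}$ already counts quadruples directly; so each $\Delta$ is an integer combination of indicator functions of $4$-subsets $\{x_1<x_2<x_3<x_4\}$ (with $x_4 = x$ the current point, or $x$ the largest, depending on the term) having prescribed patterns. I would tabulate, for every $\sigma \in S_4$, the net coefficient with which the $4$-subsets of pattern $\sigma$ are counted by $\sum_x \Delta$. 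By the dihedral structure of the construction — \textsc{tau-star} calls \textsc{quad} on $\pi$, on $\pi$ read right-to-left ($=$ horizontal reflection), on $\pi^{-1}$ ($=$ transpose), and on $\pi^{-1}$ reversed — the total $S$ is invariant under the order-$4$ dihedral action generated by these two reflections, and each orbit of $S_4$ under this action lies entirely inside $\mathcal C$ or inside $\mathcal D$ (this is exactly the content of the cruciform picture: reflections and transpose preserve "one point per corner" vs. "two opposite pairs"). So it suffices to compute the coefficient of one representative per orbit in a single \textsc{quad} and multiply by the orbit's intersection with the four images; the coefficient should come out proportional to $+1$ on $\mathcal C$ and $-1/2$ on $\mathcal D$ up to a common factor that the final $-\tfrac14/\binom n4$ and the constant $\tfrac23$ absorb (the constant accounting for $12\cdot\tfrac{1}{18}\big/\binom n4$-type normalization and the fact that $\#\sigma$ over all $\sigma\in S_4$ sums to $\binom n4$).

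The main obstacle, I expect, is the bookkeeping in the previous paragraph: correctly pinning down the sign pattern associated to each of $N_{du}, N_{ud}, N_{udu}$ (the subscripts encode an alternating up/down walk, and it is easy to transpose a "$u$" and a "$d$"), and then checking that the inclusion--exclusion among the five terms of $\Delta$ collapses to the clean $\{+1,-1/2\}$ dichotomy rather than leaving residual patterns with the wrong coefficient. A safe way to discharge this is to verify the identity on all $24$ patterns of $S_4$ directly — each \textsc{quad} on a fixed $\sigma\in S_4$ (viewed as a permutation in $S_4$) runs in four steps and $\Delta$ is a tiny polynomial in $x\in\{1,2,3,4\}$, so the whole check is a finite computation — and then invoke the $U$-statistic structure: since every term of $S$, across the four calls, depends only on the induced $4$-pattern of the four chosen indices, the value of $S$ on a general $\pi\in S_n$ is $\sum_{\sigma\in S_4} c_\sigma\,\#\sigma(\pi)$ with the very same coefficients $c_\sigma$ measured on the $4$-point case. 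Matching $\sum_\sigma c_\sigma\#\sigma$ against $12T_n = \tfrac{2}{3}\binom n4 - \tfrac14 S$ forces $c_\sigma$ to be the stated values, completing the proof.
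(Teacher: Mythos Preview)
Your plan is essentially the paper's own argument: it, too, expands \textsc{quad}$(\pi)$ as an explicit integer combination of $4$-pattern counts by interpreting $N_u,N_d,N_{ud},N_{du},N_{udu}$ and the products on line~17 in terms of occurrences ending at~$x$, then applies the four reflections $\pi,\ \text{rev}\,\pi,\ \text{inv}\,\pi,\ \text{rev}\,\text{inv}\,\pi$ to the resulting pattern formula and sums. One correction to your expectations: after summing the four calls the coefficients come out as $0$ on $\mathcal C$ and $4$ on $\mathcal D$ (so $S=4\sum_{\sigma\in\mathcal D}\#\sigma$), not $+1$ and $-\tfrac12$; the constant $\tfrac23$ then enters via $\sum_{\sigma\in S_4}\#\sigma=\binom{n}{4}$, giving $\tau^\star=\tfrac23-\tfrac{S}{4}/\binom{n}{4}$ exactly as on line~25. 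Your fallback of verifying the coefficients on each $\sigma\in S_4$ by running \textsc{quad} on the $4$-point permutation is a valid alternative that the paper does not use; the paper instead does the symbolic bookkeeping once for \textsc{quad}$(\pi)$ (Lemma~\ref{quad}) and then applies $\text{rev}$ and $\text{inv}$ to the pattern list.
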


The proof is given in the supplementary material, \S\ref{algproof}. Here are some remarks on the running time of this algorithm.
\begin{enumerate}
\item 
The statements on lines 8-16 are called $4n$ times and run in $O(\log n)$ time. All the rest runs in~$O(n)$. In total, the algorithm executes in $O(n \log n)$ time.
\item 
Computing $\tau^{\star}$ by the scheme described in \cite{even2019counting}
would require 32 equivalents of the subroutine \textsc{quad}, rather than 4 in this dedicated algorithm. This yields a 8x speedup.
\item
As with the previous algorithms for this problem in the literature, we work under the assumption that arithmetic operations on numbers of $\log n$ digits cost~$O(1)$ time, where $n$ is the input length.
\end{enumerate}

\section{R PACKAGE}
\label{rpackage}

We have implemented the above algorithm in \texttt{C++}, and have made it available in a new \texttt{R} package, named \texttt{independence} \citep[on CRAN,][]{independenceR}.
Our package provides methods for near linear time computation of the following three statistics.
\begin{itemize}[itemsep=0.25em,topsep=0pt]
\item Hoeffding's $D_n$
\item The refined Hoeffding statistic $R_n$
\item Bergsma--Dassios--Yanagimoto's $\tau^{\star} = 12T_n$
\end{itemize}
Run times of these three tests on large-scale datasets are summarized in Table~\ref{table}. These experiments were performed using Intel 3.30GHz CPU and less than 8GB RAM.

\begin{table}[h]
\caption{Run Times for Independence Tests (sec)} \label{table}
\begin{center}
\begin{tabular}{crrr}
\textbf{$n$} & \textbf{Hoeffding} & \textbf{Refined} & \textbf{Tau-Star} \\
\hline \\
$10^5$ & 0.16 \;& 0.23 \;& 0.22 \;\\
$10^6$ & 0.60 \;& 3.51 \;& 3.38 \;\\
$10^7$ & 6.74 \;& 58.86 \;& 55.42 \;\\
$10^8$ & 85.74 \;& 888.75 \;& 805.75 \;\\
\end{tabular}
\end{center}
\end{table}

For computing $p$-values, if needed, we call the method \texttt{pHoeffInd} from the \texttt{R} package \texttt{TauStar} by \cite{TauStar}. Look at the documentation of the package \texttt{independence} for technical details about $p$-value precision, some limitations due to the machine's word size, and handling ties and missing values.

Figure~\ref{code} shows
a short code excerpt using the package. It illustrates how the dependency given by \cite{yanagimoto1970measures}, as described in~\S\ref{fooling}, is undetected by Hoeffding's original test but clearly visible to its refined variants.

\begin{figure}[t]
\centering
\begin{tcolorbox}
\begin{verbatim}
> library(independence)
> set.seed(12345)
> f <- function(a,b) ifelse(a>b, 
+       pmin(b,a/2), pmax(b,(a+1)/2))
> x <- runif(300)
> y <- f(x, runif(300))
> hoeffding.D.test(x,y)$p.value
[1] 0.4589397
> hoeffding.refined.test(x,y)$p.value
[1] 2.138784e-10
> tau.star.test(x,y)$p.value
[1] 1.053099e-08
\end{verbatim}
\end{tcolorbox}
\vspace{.3in}
\caption{Using the \texttt{R} Package \texttt{independence} where Hoeffding's $D$ Fails.}
\vspace{.3in}
\label{code}
\end{figure}

\section{RELATED TESTS}
\label{related}

Ever since Hoeffding introduced his~$D$, methods of testing and quantifying dependence have continued to be developed. Many competing ideas for other population coefficients of dependence and test statistics have emerged in the statistical literature, often driven by the need to go beyond Hoeffding's original setting of two real-valued variables. Several tests have been proposed for \emph{multidimensional} variables, $X \in \mathbb{R}^p$ and~$Y \in \mathbb{R}^q$, as well as in more general spaces. For example, the \emph{distance correlaiton} (dCor) by \citet{szekely2007measuring,lyons2013distance} is based on pairwise distances, and another approach by~\citet{gretton2005measuring,gretton2008kernel} relies on a Hilbert--Schmidt kernel structures associated to the two spaces (HSIC), and other recent approaches \citep[e.g.][]{heller2013consistent,deb2019multivariate,shi2020rate}. Another extension is to more than two variables. Indeed, if $X,Y,Z$, etc.~are pairwise independent, it is still interesting to test whether they are \emph{mutually independent}. In the problem of \emph{conditional independence}, one asks whether $X$ and $Y$ are independent under conditioning on~$Z$ \citep[for example]{zhang2011kernel}. However, we leave such generalizations for another time, and remain within the realm of \emph{rank-based independence testing for two real variables}, as even in this specific setting of Hoeffding's test, there are several other approaches to discuss.

Recall that we only assume that $X$ and $Y$ are real-valued and non-atomic, and $n$ iid samples $(X_i,Y_i)$ are given. Nothing is assumed about their joint or marginal distribution, though it would also make sense to discuss also the ``copula'' setting where $F_X$ and $F_Y$ are known, or only one of them, or just known to be the same, and other assumptions on the marginal. In our setting, a distribution-free statistic only depends on the ranking permutation, $\pi \in S_n$ with $\pi(\mathrm{rank}\,X_i) = \mathrm{rank}\,Y_i$. Under the null hypothesis, every $\pi \in S_n$ is equaly likely. Therefore, our independence measures may be viewed as ways to order $S_n$ such that permutations that are more likely under dependent alternatives tend to appear near the end.

We list a selection of examples for other approaches to independence testing depending only on ranks.

\begin{itemize}
\itemsep0.05em
\item Hoeffding's dependence measure $D$ is an $L^2$ norm of $(F_{XY} - F_XF_Y)$, and so is the statistic $D_n$ with empirical cdfs. A~Kolmogornov--Smirnov type statistic uses the maximum norm instead \citep{blum1961distribution, deheuvels1979fonction, Hmisc}. 
\item
Other ideas along these lines: introducing weight functions \citep{rosenblatt1975quadratic,de1980cramer}, using $L^1$ distance \citep{Hmisc}, or comparing the empirical characteristic functions of $F_{XY}$ and $F_XF_Y$ \citep{feuerverger1993consistent,csorgHo1985testing}.
\item 
It is suggested by \cite{szekely2009brownian} to apply their \emph{distance correlation} (dCor) on ranks. This means using it with the points $\{(x,\pi(x))\}_{x=1}^{n}$ rather than the original samples.
\item 
From another viewpoint, Hoeffding's formula in \S\ref{hoefdn} averages Pearson's $\chi^2$ over all partitions of the data into $2 \times 2$ contingency tables: $\tfrac{a_i}{c_i}\!|\!\tfrac{b_i}{d_i}$. \citet{heller2016consistent} study a scheme of tests that take more partitions, finer partitions, and other ways to weigh the scores (HHG).\item
Another test that uses binning is the Maximal \emph{Mutual Information Coefficient} (MIC) by \cite{reshef2011detecting}, and its variations \citep{reshef2013equitability,gorfine2012comment,simon2014comment,renyi1959measures}
\item
\cite{chan2019characterization} found more combinations of 4-patterns that consistently detect independence, similar to $\mathcal{C}$ for $\tau^\star$, in \S\ref{bdy}. As noted by \cite{even2019counting},
the following two are computable in linear time.
\begin{align*}
&\{1324, 1342, 2413, 2431, 3124, 3142, 4213, 4231\} \\
&\{1324, 1423, 2314, 2413, 3142, 3241, 4132, 4231\}
\end{align*}
In general, such combinations form a convex cone of consistent independence measures. 
\item 
A new quick measure $\xi_n$ by \cite{chatterjee2020new} uses $\sum_{x=1}^{n-1}|\pi(x+1)-\pi(x)|$. See also \cite{shi2020power,wang2015efficient}.
\item 
\cite{garcia2014independence} suggested to use the longest monotone subsequence in~$\pi$. 
\end{itemize}

\subsubsection*{Acknowledgements}

The author was supported by the Lloyd’s Register Foundation / Alan Turing Institute programme on Data-Centric Engineering.

\bibliography{independence}

\begin{thebibliography}{}

\bibitem[Bergsma, 2010]{bergsma2010nonparametric}
Bergsma, W. (2010).
\newblock Nonparametric testing of conditional independence by means of the
  partial copula.
\newblock {\em arXiv preprint, arXiv:1101.4607}.

\bibitem[Bergsma and Dassios, 2010]{bergsma2010consistent}
Bergsma, W. and Dassios, A. (2010).
\newblock A consistent test of independence based on a sign covariance related
  to {K}endall's tau.
\newblock {\em arXiv preprint arXiv:1007.4259}.

\bibitem[Bergsma and Dassios, 2014]{bergsma2014consistent}
Bergsma, W. and Dassios, A. (2014).
\newblock A consistent test of independence based on a sign covariance related
  to {K}endall's tau.
\newblock {\em Bernoulli}, 20(2):1006--1028.

\bibitem[Blum et~al., 1961]{blum1961distribution}
Blum, J.~R., Kiefer, J.~C., and Rosenblatt, M. (1961).
\newblock Distribution free tests of independence based on the sample
  distribution function.
\newblock {\em The Annals of Mathematical Statistics}, pages 485--498.

\bibitem[Chan et~al., 2019]{chan2019characterization}
Chan, T., Kral, D., Noel, J.~A., Pehova, Y., Sharifzadeh, M., and Volec, J.
  (2019).
\newblock Characterization of quasirandom permutations by a pattern sum.
\newblock {\em arXiv preprint arXiv:1909.11027}.

\bibitem[Chatterjee, 2020]{chatterjee2020new}
Chatterjee, S. (2020).
\newblock A new coefficient of correlation.
\newblock {\em Journal of the American Statistical Association}, pages 1--21.

\bibitem[Cooper, 2004]{cooper2004quasirandom}
Cooper, J.~N. (2004).
\newblock Quasirandom permutations.
\newblock {\em Journal of Combinatorial Theory, Series A}, 106(1):123--143.

\bibitem[Cormen et~al., 2009]{cormen2009introduction}
Cormen, T.~H., Leiserson, C.~E., Rivest, R.~L., and Stein, C. (2009).
\newblock {\em Introduction to algorithms}.
\newblock MIT press.

\bibitem[Cs{\"o}rg{\H{o}}, 1985]{csorgHo1985testing}
Cs{\"o}rg{\H{o}}, S. (1985).
\newblock Testing for independence by the empirical characteristic function.
\newblock {\em Journal of Multivariate Analysis}, 16(3):290--299.

\bibitem[De~Wet, 1980]{de1980cramer}
De~Wet, T. (1980).
\newblock Cram{\'e}r-von {M}ises tests for independence.
\newblock {\em Journal of Multivariate Analysis}, 10(1):38--50.

\bibitem[Deb and Sen, 2019]{deb2019multivariate}
Deb, N. and Sen, B. (2019).
\newblock Multivariate rank-based distribution-free nonparametric testing using
  measure transportation.
\newblock {\em arXiv preprint arXiv:1909.08733}.

\bibitem[Deheuvels, 1979]{deheuvels1979fonction}
Deheuvels, P. (1979).
\newblock La fonction de d{\'e}pendance empirique et ses propri{\'e}t{\'e}s. un
  test non param{\'e}trique d'ind{\'e}pendance.
\newblock {\em Bulletins de l'Acad{\'e}mie Royale de Belgique}, 65(1):274--292.

\bibitem[Deheuvels et~al., 2006]{deheuvels2006quadratic}
Deheuvels, P., Peccati, G., and Yor, M. (2006).
\newblock On quadratic functionals of the brownian sheet and related processes.
\newblock {\em Stochastic Processes and their Applications}, 116(3):493--538.

\bibitem[Dhar et~al., 2016]{dhar2016study}
Dhar, S.~S., Dassios, A., and Bergsma, W. (2016).
\newblock A study of the power and robustness of a new test for independence
  against contiguous alternatives.
\newblock {\em Electronic Journal of Statistics}, 10(1):330--351.

\bibitem[Drton et~al., 2018]{drton2018high}
Drton, M., Han, F., and Shi, H. (2018).
\newblock High dimensional independence testing with maxima of rank
  correlations.
\newblock {\em arXiv preprint arXiv:1812.06189}.

\bibitem[Even{-Zohar}, 2020a]{independenceR}
Even{-Zohar}, C. (2020a).
\newblock {\em
  \href{https://CRAN.R-project.org/package=independence}{\textbf{independence}}:
  Fast Rank-Based Independence Testing}.
\newblock R package version 1.0.

\bibitem[Even{-Zohar}, 2020b]{even2018patterns}
Even{-Zohar}, C. (2020b).
\newblock Patterns in random permutations.
\newblock {\em Combinatorica (in press), preprint arxiv:1811.07883}.

\bibitem[Even-Zohar and Leng, 2019]{even2019counting}
Even-Zohar, C. and Leng, C. (2019).
\newblock Counting small permutation patterns.
\newblock {\em arXiv preprint arXiv:1911.01414, accepted to SODA'21}.

\bibitem[Feuerverger, 1993]{feuerverger1993consistent}
Feuerverger, A. (1993).
\newblock A consistent test for bivariate dependence.
\newblock {\em International Statistical Review/Revue Internationale de
  Statistique}, pages 419--433.

\bibitem[Garc{\'\i}a and Gonz{\'a}lez-L{\'o}pez, 2014]{garcia2014independence}
Garc{\'\i}a, J.~E. and Gonz{\'a}lez-L{\'o}pez, V.~A. (2014).
\newblock Independence tests for continuous random variables based on the
  longest increasing subsequence.
\newblock {\em Journal of Multivariate Analysis}, 127:126--146.

\bibitem[Glebov et~al., 2015]{glebov2015finitely}
Glebov, R., Grzesik, A., Klimo{\v{s}}ov{\'a}, T., and Kr{\'a}l', D. (2015).
\newblock Finitely forcible graphons and permutons.
\newblock {\em Journal of Combinatorial Theory, Series B}, 110:112--135.

\bibitem[Gorfine et~al., 2012]{gorfine2012comment}
Gorfine, M., Heller, R., and Heller, Y. (2012).
\newblock Comment on detecting novel associations in large data sets.
\newblock {\em Science}, pages 1--6.

\bibitem[Gretton et~al., 2005]{gretton2005measuring}
Gretton, A., Bousquet, O., Smola, A., and Sch{\"o}lkopf, B. (2005).
\newblock Measuring statistical dependence with {H}ilbert--{S}chmidt norms.
\newblock In {\em International conference on algorithmic learning theory},
  pages 63--77. Springer.

\bibitem[Gretton et~al., 2008]{gretton2008kernel}
Gretton, A., Fukumizu, K., Teo, C.~H., Song, L., Sch{\"o}lkopf, B., and Smola,
  A.~J. (2008).
\newblock A kernel statistical test of independence.
\newblock In {\em Advances in neural information processing systems}, pages
  585--592.

\bibitem[{Harrell}, 2015]{harrell2015regression}
{Harrell}, F.~E. (2015).
\newblock {\em Regression modeling strategies: with applications to linear
  models, logistic and ordinal regression, and survival analysis}.
\newblock Springer.

\bibitem[Harrell, 2020]{Hmisc}
Harrell, F.~E. (2020).
\newblock {\em Hmisc: Harrell Miscellaneous}.
\newblock R package version 4.4-0.

\bibitem[Heller et~al., 2013]{heller2013consistent}
Heller, R., Heller, Y., and Gorfine, M. (2013).
\newblock A consistent multivariate test of association based on ranks of
  distances.
\newblock {\em Biometrika}, 100(2):503--510.

\bibitem[Heller et~al., 2016]{heller2016consistent}
Heller, R., Heller, Y., Kaufman, S., Brill, B., and Gorfine, M. (2016).
\newblock Consistent distribution-free {K}-sample and independence tests for
  univariate random variables.
\newblock {\em The Journal of Machine Learning Research}, 17(1):978--1031.

\bibitem[Heller and Heller, 2016]{heller2016computing}
Heller, Y. and Heller, R. (2016).
\newblock Computing the {B}ergsma {D}assios sign-covariance.
\newblock {\em arXiv preprint arXiv:1605.08732}.

\bibitem[Hoeffding, 1948a]{hoeffding1948class}
Hoeffding, W. (1948a).
\newblock A class of statistics with asymptotically normal distribution.
\newblock {\em The Annals of Mathematical Statistics}, pages 293--325.

\bibitem[Hoeffding, 1948b]{hoeffding1948non}
Hoeffding, W. (1948b).
\newblock A non-parametric test of independence.
\newblock {\em The annals of mathematical statistics}, pages 546--557.

\bibitem[Hoeffding, 1994]{hoeffding1994collected}
Hoeffding, W. (1994).
\newblock {\em The collected works of Wassily Hoeffding (N.I. Fisher and P.K.
  Sen, eds)}.
\newblock Springer-Verlag, New York.

\bibitem[H{\"o}ffding, 1940]{hoffding1940masstabinvariante}
H{\"o}ffding, W. (1940).
\newblock Masstabinvariante korrelationstheorie.
\newblock {\em Schriften des Mathematischen Instituts und Instituts f\"ur
  Angewandte Mathematik der Universit\"at Berlin}, 5:181--233.

\bibitem[Hoppen et~al., 2013]{hoppen2013limits}
Hoppen, C., Kohayakawa, Y., Moreira, C.~G., R{\'a}th, B., and Sampaio, R.~M.
  (2013).
\newblock Limits of permutation sequences.
\newblock {\em Journal of Combinatorial Theory, Series B}, 103(1):93--113.

\bibitem[Hoppen et~al., 2011]{hoppen2011testing}
Hoppen, C., Kohayakawa, Y., Moreira, C.~G., and Sampaio, R.~M. (2011).
\newblock Testing permutation properties through subpermutations.
\newblock {\em Theoretical Computer Science}, 412(29):3555--3567.

\bibitem[Janson, 1984]{janson1984asymptotic}
Janson, S. (1984).
\newblock The asymptotic distributions of incomplete {U}-statistics.
\newblock {\em Probability Theory and Related Fields}, 66(4):495--505.

\bibitem[Janson et~al., 2015]{janson2015asymptotic}
Janson, S., Nakamura, B., and Zeilberger, D. (2015).
\newblock On the asymptotic statistics of the number of occurrences of multiple
  permutation patterns.
\newblock {\em Journal of Combinatorics}, 6:117--143.

\bibitem[Kr{\'a}l' and Pikhurko, 2013]{kral2013quasirandom}
Kr{\'a}l', D. and Pikhurko, O. (2013).
\newblock Quasirandom permutations are characterized by 4-point densities.
\newblock {\em Geometric and Functional Analysis}, 23(2):570--579.

\bibitem[Lyons, 2013]{lyons2013distance}
Lyons, R. (2013).
\newblock Distance covariance in metric spaces.
\newblock {\em The Annals of Probability}, 41(5):3284--3305.

\bibitem[Mathematica, 2020]{Mathematica}
Mathematica (2020).
\newblock {V}ersion 12.1, {W}olfram {R}esearch, {I}nc.
\newblock Champaign, IL, 2020.

\bibitem[Nandy et~al., 2016]{nandy2016large}
Nandy, P., Weihs, L., and Drton, M. (2016).
\newblock Large-sample theory for the {B}ergsma-{D}assios sign covariance.
\newblock {\em Electronic Journal of Statistics}, 10(2):2287--2311.

\bibitem[{R}, 2020]{R}
{R} (2020).
\newblock {\em Core {T}eam, {R}: A Language and Environment for Statistical
  Computing}.
\newblock R Foundation for Statistical Computing, Vienna, Austria.

\bibitem[R{\'e}nyi, 1959]{renyi1959measures}
R{\'e}nyi, A. (1959).
\newblock On measures of dependence.
\newblock {\em Acta Mathematica Academiae Scientiarum Hungarica},
  10(3-4):441--451.

\bibitem[Reshef et~al., 2013]{reshef2013equitability}
Reshef, D., Reshef, Y., Mitzenmacher, M., and Sabeti, P. (2013).
\newblock Equitability analysis of the maximal information coefficient, with
  comparisons.
\newblock {\em arXiv preprint arXiv:1301.6314}.

\bibitem[Reshef et~al., 2011]{reshef2011detecting}
Reshef, D.~N., Reshef, Y.~A., Finucane, H.~K., Grossman, S.~R., McVean, G.,
  Turnbaugh, P.~J., Lander, E.~S., Mitzenmacher, M., and Sabeti, P.~C. (2011).
\newblock Detecting novel associations in large data sets.
\newblock {\em science}, 334(6062):1518--1524.

\bibitem[Rosenblatt, 1975]{rosenblatt1975quadratic}
Rosenblatt, M. (1975).
\newblock A quadratic measure of deviation of two-dimensional density estimates
  and a test of independence.
\newblock {\em The Annals of Statistics}, pages 1--14.

\bibitem[SAS, 2015]{sas2015base}
SAS (2015).
\newblock {\em Base SAS 9.4 procedures guide}.
\newblock SAS Institute.

\bibitem[Shi et~al., 2020a]{shi2020power}
Shi, H., Drton, M., and Han, F. (2020a).
\newblock On the power of {C}hatterjee rank correlation.
\newblock {\em arXiv preprint arXiv:2008.11619}.

\bibitem[Shi et~al., 2020b]{shi2020rate}
Shi, H., Hallin, M., Drton, M., and Han, F. (2020b).
\newblock Rate-optimality of consistent distribution-free tests of independence
  based on center-outward ranks and signs.
\newblock {\em arXiv preprint arXiv:2007.02186}.

\bibitem[Shiloach and Vishkin, 1982]{shiloach1982n2log}
Shiloach, Y. and Vishkin, U. (1982).
\newblock An {$O(n^2\log n)$} parallel max-flow algorithm.
\newblock {\em Journal of Algorithms}, 3(2):128--146.

\bibitem[Simon and Tibshirani, 2014]{simon2014comment}
Simon, N. and Tibshirani, R. (2014).
\newblock Comment on" detecting novel associations in large data sets" by
  {R}eshef et al, {S}cience {D}ec 16, 2011.
\newblock {\em arXiv preprint arXiv:1401.7645}.

\bibitem[Stanley, 2011]{stanley2011enumerative}
Stanley, R.~P. (2011).
\newblock {\em Enumerative Combinatorics}, volume~1.
\newblock Cambridge University Press.

\bibitem[Sz{\'e}kely and Rizzo, 2009]{szekely2009brownian}
Sz{\'e}kely, G.~J. and Rizzo, M.~L. (2009).
\newblock Brownian distance covariance.
\newblock {\em The annals of applied statistics}, pages 1236--1265.

\bibitem[Sz{\'e}kely et~al., 2007]{szekely2007measuring}
Sz{\'e}kely, G.~J., Rizzo, M.~L., and Bakirov, N.~K. (2007).
\newblock Measuring and testing dependence by correlation of distances.
\newblock {\em The annals of statistics}, 35(6):2769--2794.

\bibitem[Wang et~al., 2015]{wang2015efficient}
Wang, Y., Li, Y., Cao, H., Xiong, M., Shugart, Y.~Y., and Jin, L. (2015).
\newblock Efficient test for nonlinear dependence of two continuous variables.
\newblock {\em BMC bioinformatics}, 16(1):260.

\bibitem[Weihs, 2019]{TauStar}
Weihs, L. (2019).
\newblock {\em TauStar: Efficient Computation and Testing of the
  {B}ergsma--{D}assios Sign Covariance}.
\newblock R package version 1.1.4.

\bibitem[Weihs et~al., 2016]{weihs2016efficient}
Weihs, L., Drton, M., and Leung, D. (2016).
\newblock Efficient computation of the {B}ergsma--{D}assios sign covariance.
\newblock {\em Computational Statistics}, 31(1):315--328.

\bibitem[Yanagimoto, 1970]{yanagimoto1970measures}
Yanagimoto, T. (1970).
\newblock On measures of association and a related problem.
\newblock {\em Annals of the Institute of Statistical Mathematics},
  22(1):57--63.

\bibitem[Zhang et~al., 2011]{zhang2011kernel}
Zhang, K., Peters, J., Janzing, D., and Sch{\"o}lkopf, B. (2011).
\newblock Kernel-based conditional independence test and application in causal
  discovery.
\newblock In {\em Proceedings of the Twenty-Seventh Conference on Uncertainty
  in Artificial Intelligence}, pages 804--813.

\end{thebibliography}


\onecolumn
\appendix 
\section{PROOF OF PROPOSITION \ref{fool2}}\label{foolproof}

For $X = 0.X_1X_2X_3\dots$ with independent $X_i \sim \text{Ber}(\tfrac12)$, it is a standard fact that $X \sim U(0,1)$. Similarly for $Y = 0.Y_1Y_2Y_3\dots Y_\ell X_{\ell+1}X_{\ell+2}\dots$, since given only the number~$\ell(X)$, the next digit $Y_{\ell+1} = X_{\ell+1} \sim \text{Ber}(\tfrac12)$ and all the other digits of $Y$ are independent by definition. Hence also $Y \sim U(0,1)$ without conditioning on~$\ell$.

It is hence left to show that the joint cdf $F(x,y)=xy$ almost everywhere with respect to~$F$. Let $(x,y)$ be any point obtained from the random construction. The distribution is clearly supported on the set of such points. Suppose $x<0.5$, as the case $x>0.5$ follows by symmetry. 

In order to compute $F(x,y) = P(X \leq x, Y \leq y)$, we divide into cases according to certain ranges for~$X$. It will be convenient to represent $x$ as a series:
$$ x \;=\; 2^{-a} + 2^{-b} + 2^{-c} + 2^{-d} + \dots $$
for some $1 < a < b < c < d < \cdots$. Note that this expansion is infinite with probability one. Denote its ``tails'' by $x_b = 2^{-b} + 2^{-c} + \dots$, and $x_c = 2^{-c} + 2^{-d} + \dots$, and so on. This yields a partition of the interval $[0,x]$:
$$ x \,=\, x_a \,>\, 2^{-a} \,>\, x_b \,>\, 2^{-b} \,>\, x_c \,>\, 2^{-c} \,>\, x_d \,>\, 2^{-d} \,>\, \cdots \,>\, 0 $$
The proof prooceeds by computing the conditional probability that $Y \leq y$, given the interval where~$X$ falls. 

Suppose that $X \in (2^{-i}, x_i)$ for some~$i$. This means that the binary expansion of~$X$ starts with a run of exactly $i-1$ zeros, and then its $i$th digit is one. Since $Y$ is obtained by randomizing these zeros, $Y = N/2^{i-1} + X$ with a uniformly random integer $N \in \{0,1,\dots,2^{i-1}-1\}$. Similarly $y = n/2^{i-1} + x_i$ for some integer $n$, because $(x,y)$ is obtained from the same process. It follows that $Y < y$ if and only if $N \leq n$, where the case of equality follows from~$X < x_i$. Since the $2^{i-1}$ values for~$N$ are equally likely, 
$$ P\left(Y \leq y \;|\;  2^{-i} < X < x_i \right) \;=\;  \frac{n+1}{2^{i-1}} \;=\; y - x_i + 2^{-(i-1)} $$

Now suppose that $X \in (x_j,2^{-i})$ for some $i<j$, corresponding to adjacent terms in the above series representation of~$x$. As before, $y = n/2^{i} + x_j$ for some integer $n$, and $N \in \{0,1,\dots,2^i-1\}$ denotes the uniformly random contribution of the first $i$ digits of~$Y$. Apparently, we only have a lower bound $Y \geq N/2^i + x_j$ since \emph{at least} $i$ zeros are resampled. However, clearly $Y < (N+1)/2^i$. This is enough to deduce that $N<n$ if and only if $Y < y$, and therefore,
$$ P\left(Y \leq y \;|\;  x_j < X < 2^{-i} \right) \;=\; \frac{n}{2^i} \;=\; y - x_j $$

For the whole interval $(x_j,x_i)$, we use the law of total probability and the two previous computations, and then simplify using $x_i - x_j = 2^{-i}$.
\begin{align*}
P\left(X \in (x_j,x_i), \; Y \leq y\right) \;&=\; P\left(X \in (x_j, 2^{-i}), \; Y \leq y\right) \,+\, P\left(X \in (2^{-i},x_i), \; Y \leq y\right)
\\ \;&=\; (2^{-i}-x_j)(y-x_j) + (x_i-2^{-i})(y - x_i + 2^{-(i-1)}) \\
\;&=\; (2^{-i}-x_j)(y-x_j) + x_j(y-x_j + 2^{-i}) \\
\;&=\; 2^{-i} y
\end{align*}
Finally, by summation over intervals for $X$,
\begin{align*}
P\left(X \leq x, \; Y \leq y\right) \;&=\; P\left(X \in (x_b, x_a), \; Y \leq y\right) \,+\, P\left(X \in (x_c, x_b), \; Y \leq y\right) \,+\, \dots \\ 
\;&=\; 2^{-a} y \;+\; 2^{-b} y \;+\; 2^{-c} y \;+\; \dots \\ 
\;&=\; xy
\end{align*}
In conclusion, $F(x,y) = F_X(x)F_Y(y)$ for almost every $(x,y)$ with respect to~$F$, and Hoeffding's $D=0$ by the definition.
\qed

\section{PROOF OF PROPOSITION \ref{algworks}}\label{algproof}

Our proof provides a step-by-step human-verifiable analysis of the functions \textsc{quad} and \textsc{tau-star}. We note that the general machinery in \cite{even2019counting} can automate some this work.

\begin{lemma}
\label{quad}
Let $\pi \in S_n$. Then the function \textsc{quad} in Algorithm~\ref{alg} computes the following combination of pattern occurrence counts in~$\pi$:
$$ 2\,\#1324 + \#1342 + 3\,\#1423 + 2 \#1432 - \#2143 +2\,\#2314 +2\,\#2413 + \#3412 +2\,\#4123 + \#4132 + \#4213 $$
\end{lemma}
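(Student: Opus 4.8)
I would prove the lemma by pinning down what each array and counter maintained by \textsc{quad} holds at a fixed iteration $x$, and then reading off the contribution of the increment $\Delta$ to each pattern count. Since the cell $A[\pi(x)]$ is written only after the five queries on lines~8--12, at the start of iteration $x$ the array $A$ is the indicator of $\{\pi(1),\dots,\pi(x-1)\}$, while $A_u[\pi(j)]$, $A_d[\pi(j)]$ and $A_{ud}[\pi(j)]$ (for $j<x$) still hold the values written during iteration $j$. Unwinding these definitions, one checks that at iteration $x$: $N_u$ and $N_d$ count the indices $j<x$ with $\pi(j)<\pi(x)$, resp.\ $\pi(j)>\pi(x)$, so $N_u+N_d=x-1$; $N_{du}$ counts pairs $i<j<x$ with $\pi(i)>\pi(j)$ and $\pi(j)<\pi(x)$, and symmetrically $N_{ud}$ counts pairs $i<j<x$ with $\pi(i)<\pi(j)$ and $\pi(j)>\pi(x)$; $A_{ud}[\pi(j)]$ counts triples $i<k<j$ with $\pi(i)<\pi(k)$ and $\pi(k)>\pi(j)$, i.e.\ triples inducing the pattern $132$ or $231$; and therefore $N_{udu}$ counts quadruples $i<k<j<x$ in which $(i,k,j)$ has pattern $132$ or $231$ and $\pi(j)<\pi(x)$. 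Because every $4$-element subset of $\{1,\dots,n\}$ is processed exactly once, at the iteration equal to its largest element, and \textsc{quad} returns $\sum_x\Delta$, it suffices to show that a $4$-subset inducing $\sigma\in S_4$ (with its maximal position in the last coordinate of $\sigma$) contributes to $\Delta$ exactly the coefficient of $\#\sigma$ claimed in the lemma, and nothing for the remaining patterns.

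Next I would expand $\Delta=2N_{udu}-N_{du}N_d-N_{ud}N_u+(x-2)N_uN_d$ term by term. The term $2N_{udu}$ is already a genuine quadruple count: appending the point $x$ (with $\pi(x)>\pi(j)$) to a $132$- or $231$-triple yields, according to where $\pi(x)$ lands among the three values, one of the patterns $1324,1423,2314,2413,3412$, each exactly once, so that $\sum_x 2N_{udu}=2(\#1324+\#1423+\#2314+\#2413+\#3412)$. The three products demand more care, because a product of two counts ranges over \emph{ordered pairs} of sub-configurations whose index sets may overlap. I would split each product into a \emph{genuine} part, in which the sub-configurations are disjoint and pin down a true $4$-subset, and a \emph{degenerate} part, in which they overlap and pin down only a $3$-subset; for the last term I would first use $N_u+N_d=x-1$ to rewrite $(x-2)N_uN_d=N_u^2N_d+N_uN_d^2-N_uN_d$. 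The degenerate parts then cancel: the only possible overlap in $N_{du}N_d$ is $k=i$, forcing $\pi(j)<\pi(x)<\pi(i)$ with $i<j<x$, an occurrence of the $3$-pattern $312$; the only overlap in $N_{ud}N_u$ is $k=i$, forcing an occurrence of $132$; and the degenerate part of $(x-2)N_uN_d$ is $N_uN_d$, which counts the $3$-subsets whose middle value sits at the maximal position, i.e.\ the occurrences of $132$ and of $312$ combined. With the signs $-,-,+$ present in $\Delta$, these contributions cancel when summed over all $x$.

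It then remains to enumerate the genuine parts. For $N_{du}N_d$ the genuine $4$-subset $\{i,j,k,x\}$ comes from an inversion pair $(i,j)$ with $\pi(j)<\pi(x)$ together with a disjoint point $k$ with $\pi(k)>\pi(x)$; its induced pattern is determined by the position of $k$ among $i,j$ (three cases) and by where $\pi(i)$ falls relative to $\pi(x)$ and $\pi(k)$ (three cases), so this is a finite table of about a dozen entries, and $N_{ud}N_u$ is its mirror image. The genuine part of $(x-2)N_uN_d$ is cleaner: the genuine parts of $N_u^2N_d$ and of $N_uN_d^2$ each count, with multiplicity~$2$, the $4$-subsets in which the maximal-position point has value-rank $3$, resp.\ $2$, so their sum is $2\sum_{\sigma(4)\in\{2,3\}}\#\sigma$. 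Adding these three tallies to $2N_{udu}$ and collecting coefficients reproduces the eleven-term combination in the statement, with coefficient $0$ on every pattern not listed there.

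The substantive part of the argument --- and essentially the only place where a mistake could occur --- is the bookkeeping around the genuine/degenerate split: checking that the degenerate $3$-subset contributions cancel exactly, and carrying out the roughly twelve-case enumerations for the genuine parts of $N_{du}N_d$ and $N_{ud}N_u$ without a sign or index slip. Everything that precedes this is routine once the array semantics in the first step are nailed down, and the general pattern-counting machinery of \cite{even2019counting} could in principle be invoked to automate it.
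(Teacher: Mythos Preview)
Your argument is correct and follows essentially the same route as the paper's: both identify the pattern-counting semantics of each array and counter at iteration~$x$, expand the three products in~$\Delta$ into genuine 4-pattern contributions plus degenerate 3-pattern overlaps, observe that the 3-pattern terms cancel, and then sum over~$x$. The only cosmetic difference is that you rewrite $(x-2)N_uN_d$ via $N_u+N_d=x-1$ as $N_u^2N_d+N_uN_d^2-N_uN_d$, whereas the paper splits $(x-2)=(x-3)+1$ and reads the factor $x-3$ directly as the number of ways to choose a fourth index; both lead to the same nine-plus-nine case enumeration and the same cancellation of $\#132^{(x)}+\#312^{(x)}$.
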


\begin{proof}
Denote the values of the variables in the function \textsc{quad}, as assigned during the $x$th iteration of the \textbf{for} loop on lines~7-18, by 
$$ N_u^{(x)},\,N_d^{(x)},\,N_{ud}^{(x)},\,N_{du}^{(x)},\,N_{udu}^{(x)},\,A^{(x)},\,A_u^{(x)},\,A_d^{(x)},\,A_{ud}^{(x)},\,\Delta^{(x)} \;\;\;\;\;\;\;\; 1 \leq x \leq n $$  

The following chain of assertions relate these variable to pattern occurrences in $\pi$, which are counted during the $x$th iteration.
\begin{itemize}
\item
By the assignment on line 13, $A^{(x)}[y] = 1$ if $y=\pi(x')$ for some $x' \leq x$, and 0 otherwise. By line 8, $N_u(x)$ is the sum of $A^{(x)}[y]$ for $y < \pi(x)$. Hence it is the number of positions $x'$ such that $x'<x$ and $\pi(x')<\pi(x)$. In other words, this is the number of increasing pairs of entries of $\pi$ whose last position is $x$. We therefore express it as
$$ N_u^{(x)} \;=\; \#12^{(x)} $$
meaning the number of occurrences of the pattern $12$ whose last position is~$x$.
\item
Similarly, by line 9,
$$ N_d^{(x)} \;=\; \#21^{(x)} $$
\item
By line 14,
$A_u^{(x)}[y] = \#12^{(x')}$ if $x' = \pi^{-1}(y) \leq x$, and 0 otherwise.  $N_{ud}^{(x)}$ is the sum of $A_u^{(x)}[y]$ for all $y > \pi(x)$, by line 11. That is the number of increasing pairs that end at some $x' < x$, and then decrease such that $\pi(x') > \pi(x)$. In conclusion, this suffix sum counts $x''$ and $x'$ such that $x'' < x' < x$ and $\pi(x'') < \pi(x') > \pi(x)$. It can be expressed using the following two pattern counts:
$$ N_{ud}^{(x)} \;=\; \#132^{(x)} + \#231^{(x)} $$
\item
Similarly, By lines 15 and 10,
$$ N_{du}^{(x)} \;=\; \#213^{(x)} + \#312^{(x)} $$
\item
By line 16, $A_{ud}^{(x)}[y] = \#132^{(x')} + \#231^{(x')}$ for $x' = \pi^{-1}(y) \leq x$ and 0 otherwise. Then, on line 12, we sum over $y<\pi(x)$ and obtain the number of $x''' < x'' < x' < x$ such that $\pi(x''') < \pi(x'') > \pi(x') < \pi(x)$, or equivalently the following sum over so-called ``alternating'' patterns:
$$ N_{udu}^{(x)} \;=\; \#1324^{(x)}+\#1423^{(x)}+\#2314^{(x)}+\#2413^{(x)}+\#3412^{(x)}  $$
\item
The product of $N_u^{(x)}$ and $N_d^{(x)}$ is the number of $x',x'' < x$ with $\pi(x') > \pi(x)$ and $\pi(x'') < \pi(x)$. Such pairs of positions are exactly occurrences of the following two patterns:
$$ N_u^{(x)}  N_d^{(x)} \;=\; \#132^{(x)} + \#312^{(x)} $$
\item
Given $x'$ and $x''$ as above, there are $(x-3)$ other positions $x''' < x$ left. Considering the 6 cases for the ordering of $x',x'',x'''$ and the 4 cases for the ranking of $\pi(x''')$ among $\pi(x'')<\pi(x)<\pi(x')$, the number of such trios of positions $x',x'',x'''$ is a sum of 24 patterns counts:
\begin{align*}
(x-3)N_u^{(x)}  N_d^{(x)} \;&=\;  2\,\#1342^{(x)} + 2\,\#1432^{(x)} + 2\,\#3142^{(x)} + 2\,\#3412^{(x)} + 2\,\#4132^{(x)} + 2\,\#4312^{(x)} \\ &\,+\; 2\,\#1243^{(x)} + 2\,\#1423^{(x)} + 2\,\#2143^{(x)} + 2\,\#2413^{(x)} + 2\,\#4123^{(x)} + 2\,\#4213^{(x)}
\end{align*}
\item
The product of $N_{ud}^{(x)}$ and $N_u^{(x)}$ gives the number of $x',x'',x''$ such that $x'' < x' < x$ and $x''' < x$ while $\pi(x'') < \pi(x') > \pi(x)$ and $\pi(x''') < \pi(x)$. Here there are 3 possible orderings for $x',x'',x'''$ and 3 for $\pi(x'),\pi(x''),\pi(x''')$ and one more case where $x''$ and $x'''$ coincide:
$$ N_{ud}^{(x)} N_u^{(x)} \;=\; \#1342^{(x)} + \#3142^{(x)} + \#3412^{(x)} + 2\;\#2143^{(x)} + 2\;\#1243^{(x)} + \#1423^{(x)} + \#2413^{(x)} + \#132^{(x)} $$
\item
Similarly,
$$ N_{du}^{(x)} N_d^{(x)} \;=\; \#4213^{(x)} + \#2413^{(x)} + \#2143^{(x)} + 2\;\#3412^{(x)} + 2\;\#4312^{(x)} + \#4132^{(x)} + \#3142^{(x)} + \#312^{(x)} $$
\end{itemize}
Using the last five pattern combinations and line 17, one can express $\Delta^{(x)}$ with patterns as well. We sum $\Delta^{(x)}$ over $x \in \{1,\dots,n\}$, on line 18. Since for every pattern $\sum_{x=1}^n\#\sigma^{(x)}=\#\sigma$, the summation is done by dropping the superscript ``$(x)$'' from all pattern counts, as follows. 
\begin{align*}
&\text{\textsc{quad}}(\pi) \;=\; \sum_{x=1}^n \Delta^{(x)} \;=\; \sum_{x=1}^n \left[ 2N_{udu}^{(x)} - N_{du}^{(x)} N_d^{(x)} - N_{ud}^{(x)}N_u^{(x)} + (x-3) N_u^{(x)} N_d^{(x)} + N_u^{(x)} N_d^{(x)} \right] \\
&= 2\,\#1324 + 2\,\#1423 + 2\,\#2314 + 2\,\#2413 + 2\,\#3412 \\
&\;\;\;\; - \#4213 - \#2413 - \#2143 - 2\;\#3412 - 2\;\#4312 - \#4132 - \#3142 - \#312 \\
&\;\;\;\; -\#1342 - \#3142 - \#3412 - 2\;\#2143 - 2\;\#1243 - \#1423 - \#2413 - \#132 \\
&\;\;\;\; +2\,\#1342 + 2\,\#1432 + 2\,\#3142 + 2\,\#3412 + 2\,\#4132 + 2\,\#4312 \\ 
&\;\;\;\; + 2\,\#1243 + 2\,\#1423 + 2\,\#2143 + 2\,\#2413 + 2\,\#4123 + 2\,\#4213 \\
&\;\;\;\; +\#132+\#312 \\[0.5em]
&= 2\,\#1324 + \#1342 + 3\,\#1423 + 2\,\#1432 - \#2143 + 2\,\#2314 + 2\,\#2413 + \#3412 + 2\,\#4123 + \#4132 + \#4213 
\end{align*}
This is the combination stated in the lemma.
\end{proof}

We now analyze the main function \textsc{tau-star}. Given a permutation $\pi = (\pi(1),\dots,\pi(n)) \in S_n$, we denote its two reflections, \emph{reverse} and \emph{inverse}, as follows: $\text{rev}\,\pi = (\pi(n),\dots,\pi(1))$ and $\text{inv}\,\pi = (\pi^{-1}(1),\dots,\pi^{-1}(n))$. It is easy to see that pattern occurrence respects these symmetries, in the sense that $\#\sigma(\pi) = \#\,\text{rev}\,\sigma\,(\text{rev}\,\pi)$ and $\#\sigma(\pi) = \#\,\text{inv}\,\sigma\,(\text{inv}\,\pi)$.

On lines 21-24, the subroutine \textsc{quad} is called four times with the inputs $\pi$, $\text{rev}\,\pi$, $\text{inv}\,\pi$, and $\text{rev}\,\text{inv}\,\pi$, respectively. The results of these four calls can  be expressed as pattern counts in the original~$\pi$, by applying $\text{rev}$, $\text{inv}$, and $\text{inv}\circ\text{rev}$ to the patterns in Lemma~\ref{quad}. This gives the following four expressions:
\begin{align*}
& 2\,\#1324 + \#1342 + 3\,\#1423 + 2 \#1432 - \#2143 +2\,\#2314 +2\,\#2413 + \#3412 +2\,\#4123 + \#4132 + \#4213 \\
& 2\,\#4231 + \#2431 + 3\,\#3241 + 2 \#2341 - \#3412 +2\,\#4132 +2\,\#3142 + \#2143 +2\,\#3214 + \#2314 + \#3124 \\
& 2\,\#1324 + \#1423 + 3\,\#1342 + 2 \#1432 - \#2143 +2\,\#3124 +2\,\#3142 + \#3412 +2\,\#2341 + \#2431 + \#3241 \\
& 2\,\#4231 + \#4132 + 3\,\#4213 + 2 \#4123 - \#3412 +2\,\#2431 +2\,\#2413 + \#2143 +2\,\#3214 + \#3124 + \#2314 
\end{align*}
The sum of these lines is
\begin{align*}
S \;= &\;4\,\#1324+4\,\#1342+4\,\#1423+4\,\#1432+4\,\#2314+4\,\#2341+4\,\#2413+4\,\#2431 \\
&\;4\,\#3124+4\,\#3142+4\,\#3214+4\,\#3241+4\,\#4123+4\,\#4132+4\,\#4213+4\,\#4231
\end{align*}
Note that these are exactly the 16 discordant patterns, denoted by $\mathcal{D}$ in~\S\ref{bdy}. In conclusion,
$$ \tau^{\star} \;=\; 12T_n \;=\; \frac{1}{\binom{n}{4}}\left( \frac{2}{3} \sum_{\sigma \in \mathcal{C}} \#\sigma \,-\, \frac{1}{3} \sum_{\sigma \in \mathcal{D}} \#\sigma \right) \;=\; \frac{1}{\binom{n}{4}}\left( \frac{2}{3} \sum_{\sigma \in S_4} \#\sigma \,-\, \sum_{\sigma \in \mathcal{D}} \#\sigma \right) \;=\; \frac23 \,-\, \frac{S}{4}/\binom{n}{4}
$$
as returned in line 25.
\qed

\end{document}